\newtheorem{theorem}{Theorem}
\newtheorem{lemma}{Lemma}
\newcommand{\cT}{{\mathcal T}}
\newcommand{\NN}{{\mathbb N}^+}
\date{\today}
\begin{document}

\raggedbottom

\title[A kernel for computing the hybridization number of multiple trees]{A quadratic kernel for computing the hybridization number of multiple trees}
\author{Leo van Iersel and Simone Linz}
\address{Centrum Wiskunde \& Informatica (CWI), P.O. Box 94079, 1090 GB, Amsterdam, Netherlands.}
\email{l.j.j.v.iersel@gmail.com}
\address{Center for Bioinformatics, University of T\"ubingen, Sand 14, 72076  T\"ubingen, Germany.}
\email{linz@informatik.uni-tuebingen.de}

\begin{abstract}
It has recently been shown that the NP-hard problem of calculating the minimum number of hybridization events that is needed to explain a set of rooted binary phylogenetic trees by means of a hybridization network is fixed-parameter tractable if an instance of the problem consists of precisely two such trees. In this paper, we show that this problem remains fixed-parameter tractable for an arbitrarily large set of rooted binary phylogenetic trees. In particular, we present a quadratic kernel.
\end{abstract}

\keywords{fixed-parameter tractability, generator, hybridization, kernel, phylogenetic network}

\maketitle

\section{Introduction}
Phylogenetic trees are a commonly used tool for representing evolutionary relationships. Let~$X$ be a finite set representing for example biological species or, more generally, \emph{taxa}. A rooted \emph{phylogenetic} $X$-\emph{tree} is a rooted tree that has no vertices of outdegree~1 and whose leaves are bijectively labeled by the elements of~$X$. Recently, rooted phylogenetic networks have become increasingly important in analyzing evolutionary histories of sets of taxa whose past may include reticulate evolutionary events such as horizontal gene transfer, hybridization, or recombination. Rooted phylogenetic networks are a generalization of rooted phylogenetic $X$-trees to directed acyclic graphs. In particular, vertices of indegree at least two are called \emph{reticulation vertices} and represent events in which, in the context of hybridization, two distinct ancestral species combine their genomes and form a new species. The \emph{number of reticulations} specified by a reticulation vertex is defined as its indegree minus one while the {\it number of reticulations} specified by a phylogenetic network $N$ is defined as the sum of the number of reticulations over all reticulation vertices in $N$. To quantify the extent to which hybridization events have had an impact on the evolutionary history of a set of present-day species, the following optimization problem has attracted much interest. Let $\cT$ be a set of rooted phylogenetic trees on the same set of taxa. What is the minimum number of reticulations specified by any  phylogenetic network that explains each of the trees in~$\cT$? The decision variant of this problem, called {\sc Hybridization Number}, as well as precise definitions are stated in Section~\ref{sec:prelim}. Since most of the research that is concerned with this question has been done in the context of hybridization, we henceforth refer to a phylogenetic network as a {\it hybridization network} and to reticulations specified by a network as \emph{hybridizations}. 

Since {\sc Hybridization Number} is APX-hard and, thus, NP-hard even for sets of rooted phylogenetic trees consisting of precisely two binary such trees~\cite{bordewich07a}, many theoretical results as well as practical algorithms have been developed for this restricted case. In particular, it has been shown that the two-tree case is fixed-parameter tractable (FPT), regardless of whether the two rooted phylogenetic trees are binary or not~\cite{bordewich07b,linzsemple2009}. Roughly speaking, to establish these results, the authors used several reduction rules that shrink each problem instance to a reduced (weighted) instance whose size is linear in the value of an optimal solution. Subsequent to these results, practical algorithms have been developed that solve {\sc Hybridization Number} for two rooted binary phylogenetic trees~\cite{albrecht2012fast,hybridnet,quantifyingreticulation,whiddenFixed,wuISBRA2010}. Instead of calculating an optimal hybridization network directly, all these algorithms make use of the concept of so-called {\it agreement forests}. Without going into details, an agreement-forest for two rooted binary phylogenetic trees $T$ and $T'$ on the same set of taxa is a collection of disjoint subtrees that are common to $T$ and $T'$. If such a collection is, in a certain sense, acyclic and of minimum size, then its number of elements minus one equates to the solution of {\sc Hybridization Number} for $\cT=\{T,T'\}$~\cite{baroni05}. However, this framework of agreement forests can only be applied to more than two phylogenetic trees if one is solely interested in the minimum number of hybridization \emph{vertices}, but not the actual minimum number of hybridizations specified by any hybridization network that explains the set of trees under consideration. These two numbers are equal in the two-tree case since each hybridization vertex has exactly two parents~\cite{linz2008reticulation}. Given this difficulty and the computational hardness of {\sc Hybridization Number}, it does not come as a surprise that, prior to this paper, there were no exact algorithms that can solve {\sc Hybridization Number} for more than two trees. The only available algorithms are described in~\cite{chen2011algorithms,pirnISMB2010} and, in fact, are heuristics that compute lower and upper bounds for a given instance.

In this paper, we show that {\sc Hybridization Number} remains fixed-parameter tractable if the input to this problem consists of arbitrarily many rooted binary phylogenetic trees on the same set of taxa. This generalization is of significant relevance for applications in (for example) evolutionary biology since biologists usually construct phylogenetic trees for more than two different genes and are interested in the number of hybridizations necessary to explain all reconstructed gene trees simultaneously. Our result shows that, as in the two-tree case, this problem can be solved by using an FPT-algorithm. We hope that this result will facilitate the development of practical algorithms in the same way as it has been the case for the restricted two-tree version of the problem.

The paper is organized as follows. The next section contains some notation and terminology that is used throughout this paper and formally states the decision problem {\sc Hybridization Number}. Section~\ref{sec:fpt} establishes the main result of this paper; thus showing that {\sc Hybridization Number} is fixed-parameter tractable by providing a quadratic kernel. We end this paper with some concluding remarks in Section~\ref{sec:conclu}.

\section{Preliminaries}\label{sec:prelim}

This section provides preliminary definitions that are used throughout this paper and formally states the decision problem {\sc Hybridization Number} for a set of rooted binary phylogenetic trees. Let~$X$ be a finite set. We refer to the elements of~$X$ as \emph{taxa}.

{\bf Phylogenetic trees.} A {\it rooted binary phylogenetic $X$-tree $T$} is a rooted tree whose root has degree two while all other interior vertices have degree three, and whose leaves are bijectively labeled by the elements of~$X$. We identify each leaf with its label and thus refer to~$X$ as the leaf set of~$T$. We regard the edges of~$T$ as being directed away from the root.

{\bf Hybridization networks.} A {\it hybridization network} $N$ on $X$ is a rooted acyclic digraph which has a single root of indegree 0 and outdegree at least 2, has no vertex with indegree and outdegree both 1, and in which the vertices of outdegree 0 are bijectively labeled with the elements of $X$. A vertex whose indegree is at least 2 is called a {\it hybridization vertex}. A hybridization network is {\it binary} if all vertices have indegree and outdegree at most 2 and each hybridization vertex has outdegree  1. Note that a rooted binary phylogenetic $X$-tree is a binary hybridization network on~$X$ with no hybridization vertices.

Let~$N$ be a hybridization network on~$X$. Furthermore, let $X'$ be a subset of~$X$, and let~$T'$ be a rooted phylogenetic $X'$-tree. Then~$T'$ is said to be a \emph{pendant subtree} of~$N$ if it is a subtree that can be detached from~$N$ by deleting a single edge. Furthermore, if $(u,v)$ is an edge of~$N$, we say that $u$ is a parent of $v$ and $v$ is a child of $u$. Note that these definitions hold in particular for rooted phylogenetic trees.

To quantify the number of hybridizations in a hybridization network $N$, the {\it hybridization number} of $N$  is given by $$h(N)=\sum_{v\ne \rho}(d^-(v)-1),$$ where $d^-(v)$ is the indegree of $v$ and $\rho$ is the root of $N$.

Let $N$ again be a hybridization network on $X$, and let $T$ be a rooted binary phylogenetic $X'$-tree, with~$X'\subseteq X$. We say that $T$ is {\it displayed} by $N$ if $T$ can be obtained from $N$ by deleting a subset of the edges and vertices of $N$ and suppressing vertices with indegree and outdegree both 1. In other words,~$N$ displays~$T$ if there exist a subgraph of~$N$ that is a subdivision of~$T$. Intuitively, if $N$ displays $T$, then all of the ancestral relationships of $T$ are visualized by $N$. Furthermore, for a set $\cT$ of rooted binary phylogenetic $X'$-trees, we say that $N$ displays $\cT$ if $N$ displays each tree in $\cT$.

The problem {\sc Hybridization Number} is to compute the minimum hybridization number of a set $\cT$ of rooted binary phylogenetic $X$-trees, which is defined as follows.
$$h(\cT)=\min\{h(N):N\mbox{ is a hybridization network that displays $\cT$}\}.$$

\noindent This problem can formally be stated as the following decision problem.

\noindent{\bf Problem:} {\sc Hybridization Number}\\
\noindent {\bf Instance:} A set $\cT$ of rooted binary phylogenetic $X$-trees and a positive integer $k$.\\
\noindent {\bf Question:} Is $h(\cT) \leq k$?

\noindent In the remainder of this paper, we will exclusively focus our attention on binary hybridization networks. To see that this is sufficient, we need the following lemma~\cite[Lemma 3]{twotrees}.
\begin{lemma}\label{l:binary}
Let $N$ be a hybridization network on $X$ that displays a set of rooted binary phylogenetic $X$-trees. Then there exists a binary hybridization network $N'$ on $X$ that displays $\cT$ such that $h(N')=h(N)$.
\end{lemma}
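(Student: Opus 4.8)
The plan is to transform $N$ into a binary network through a short sequence of local \emph{refinement} operations, each of which leaves the hybridization number unchanged and preserves the property of displaying every tree in $\cT$. There are exactly three ways in which a vertex $v$ can violate the definition of a binary network: (i) $v$ is a hybridization vertex (indegree at least $2$) whose outdegree is at least $2$; (ii) $v$ has outdegree at least $3$; or (iii) $v$ has indegree at least $3$. I would eliminate these in this order and check that repairing a later type never reintroduces an earlier one.

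For (i), I would \emph{split} $v$ by introducing a new vertex $w$, redirecting all out-edges of $v$ to emanate from $w$ instead, and adding the single edge $(v,w)$; afterwards $v$ is a hybridization vertex of outdegree $1$ and $w$ is an ordinary tree vertex of indegree $1$. After step (i) every high-outdegree vertex has indegree at most $1$, so in step (ii) I would expand such a $v$ into a small binary tree whose internal vertices all have indegree $1$. After step (ii) every remaining high-indegree vertex has outdegree $1$, so in step (iii) I would replace it by a \emph{cascade}: a directed path $u_1 \to u_2 \to \cdots \to u_{p-1}$ of $p-1$ new vertices, where $p$ is the indegree of $v$, into which the $p$ incoming edges are fed two at a time, so that each $u_i$ has indegree $2$ and outdegree $1$.

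The two bookkeeping invariants that make this work are easy to verify. Every vertex created in steps (i) and (ii) has indegree $1$ and therefore contributes $0$ to $h$, so these steps leave $h(N)$ unchanged; and in step (iii) a single vertex contributing $p-1$ to $h$ is replaced by $p-1$ vertices each contributing $1$, again leaving $h$ unchanged. One also checks that no step creates a directed cycle, a vertex with both indegree and outdegree equal to $1$, or a second root, so that the end result is a genuine binary hybridization network $N'$.

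The main obstacle is to confirm that each refinement preserves display, and here I would exploit the fact that if a tree $T \in \cT$ is displayed by $N$, then some subgraph $S$ of $N$ is a subdivision of $T$, and in such a subgraph every vertex has indegree at most $1$ (and, since $T$ is binary, outdegree at most $2$). Consequently, at each vertex $v$ targeted by a refinement, the witnessing subgraph $S$ uses at most one of $v$'s incoming edges, so I can route that unique edge through the gadget — across the new edge $(v,w)$ in case (i), through the part of the binary tree joining its root to the out-edges used by $S$ in case (ii), or along the tail of the cascade from the entry point of the used incoming edge down to $u_{p-1}$ in case (iii) — while keeping the out-edges used by $S$. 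All intermediate gadget vertices then have indegree and outdegree $1$ in the modified subgraph and are suppressed, so the contracted tree is again $T$. Applying this argument to every tree in $\cT$ shows that $N'$ displays $\cT$, which completes the proof.
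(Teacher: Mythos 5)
The paper itself does not prove this lemma; it imports it verbatim from another paper (cited as Lemma~3 of~\cite{twotrees}), so your proposal has to be judged on its own merits. Your overall strategy is the natural and essentially standard one: local, $h$-preserving refinements, with display preserved because a subgraph of $N$ that is a subdivision of a binary tree uses at most one in-edge and at most two out-edges at every vertex, so it can always be routed through each gadget. That part of your argument is correct.

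There is, however, a genuine gap: your enumeration of the ways a vertex can violate binarity is not exhaustive. The definition of a binary hybridization network requires every hybridization vertex to have outdegree exactly~$1$, so a \emph{labeled leaf of indegree at least}~$2$ (a hybridization vertex of outdegree~$0$) is also a violation, and such vertices are perfectly legal in a general hybridization network. A leaf with indegree exactly~$2$ is touched by none of your three steps: not by (i), since its outdegree is $0$ rather than at least $2$; not by (ii), since its outdegree is less than $3$; not by (iii), since its indegree is less than $3$. So on an input such as the network with root $\rho$, internal vertices $u,v$ with edges $(\rho,u),(\rho,v),(u,x_1),(v,x_2),(u,y),(v,y)$, your procedure terminates with the non-binary vertex $y$ untouched. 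Relatedly, your claim that after step (ii) every high-indegree vertex has outdegree~$1$ is false (it may have outdegree~$0$), and for such a leaf the cascade of step (iii) as described would end in a vertex of indegree~$2$ and outdegree~$0$, which is again illegal. The repair is easy and in the spirit of your step (i), but mirrored: for a leaf $x$ with $d^-(x)\geq 2$, insert a new vertex $w$ \emph{above} $x$, redirect all in-edges of $x$ into $w$, and add the single edge $(w,x)$. Then the label stays on an outdegree-$0$, indegree-$1$ vertex, $h$ is unchanged (the contribution $d^-(x)-1$ simply moves to $w$), and your routing argument for display applies verbatim since a witnessing subdivision uses at most one in-edge of $x$. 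With this fourth case added to the analysis, your proof goes through.
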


Let $(\cT,k)$ be an instance of {\sc Hybridization Number}. We will show that two reduction rules described below transform~$(\cT,k)$ into an equivalent instance~$(\cT',k)$ with a quadratic number of taxa. More precisely, $\cT'$ is a collection of rooted binary phylogenetic $X'$-trees such that $h(\cT') \leq k$ if and only if $h(\cT) \leq k$ and~$|X'|\leq 20k^2$.

To describe the reduction rules, we need some additional definitions. Let $\cT$ be a set of rooted binary phylogenetic $X$-trees and let $X'\subseteq X$. A rooted phylogenetic $X'$-tree is a \emph{common pendant subtree} of~$\cT$ if it is a pendant subtree of each element in~$\cT$. Now, let~$T\in\cT$ and let $(x_1,x_2,\ldots ,x_n)$ be a tuple of elements of $X$ with $n\geq 2$, and let~$p_i$ be the parent of the leaf labeled~$x_i$ in~$T$, for each $i\in\{1,2,\ldots ,n\}$. Then, $(x_1,x_2,\ldots ,x_n)$ is called a \emph{chain} of~$T$ if either $(p_n,p_{n-1},\ldots ,p_1)$ is a directed path in~$T$, or  $(p_n,p_{n-1},\ldots ,p_2)$ is a directed path in~$T$ and~$p_1=p_2$. Furthermore, $(x_1,x_2,\ldots ,x_n)$ is a \emph{common chain} of~$\cT$ if it is a chain of each element in~$\cT$.

Let $(\cT,k)$ be an instance of {\sc Hybridization Number}. We are now in a position to state two reduction rules.

\noindent\textbf{Subtree Reduction.} For a common pendant subtree~$T$ of~$\cT$ with at least two leaves, replace, in each element of~$\cT$, the pendant subtree~$T$ by a single leaf labeled by a new taxon (that is not yet in~$X$).

\noindent\textbf{Chain Reduction.} For a  common chain $(x_1,x_2,\ldots ,x_n)$ of $\cT$ with $n>5k$, delete, in each element of $\cT$, the leaves labeled with a member of $\{x_{5k+1},x_{5k+2},\ldots ,x_n\}$ and suppress all vertices with indegree and outdegree both 1.

\noindent We remark that similar reductions have been published in the context of calculating the minimum hybridization number as well as the so-called subtree prune and regraft distance for two phylogenies and proven to be important to develop `efficient' algorithms despite the NP-hardness of the underlying problems~\cite{bonet10,bordewich2005computational,bordewich07b}.

To obtain a proof of the kernelization for more than two trees, we need the following notion of {\it generators}. A \emph{binary $k$-reticulation generator} (with~$k\in\NN$) is an acyclic directed multigraph with a single root with indegree~0 and outdegree~1 and all other vertices have indegree~1 and outdegree~2, indegree~2 and outdegree~1, or indegree~2 and outdegree~0. Let~$N$ be a binary hybridization network, with $h(N)=k$, that has no pendant subtrees with two or more leaves. Then, a binary $k$-reticulation generator is said to be the \emph{generator underlying}~$N$ if it can be obtained from~$N$ in the following way. First, delete all leaves of~$N$ and suppress each resulting vertex with indegree and outdegree both~1. Second, if the root has outdegree~2, add a new root with an edge to the old root. For a formal proof showing that the resulting directed multigraph is indeed a binary $k$-reticulation generator, we refer the reader to~\cite[Lemma 4]{kelk2011cycle}. Reversely,~$N$ can be reconstructed from its underlying generator by subdividing edges, adjoining a leaf to each vertex that subdivides an edge, or has indegree 2 and outdegree 0 via a new edge, and deleting the outdegree-1 root. The \emph{sides} of a generator are its edges (the \emph{edge sides}) and its vertices with indegree~2 and outdegree~0 (the \emph{vertex sides}). Thus, each leaf of~$N$ is on a certain side of its underlying generator. To be more formal, let~$x$ be a leaf of~$N$ and let~$p$ be the parent of $x$. If~$p$ is a hybridization vertex, then~$p$ is a vertex side of the underlying generator and we say that~$x$ \emph{is on side}~$p$. If, on the other hand, $p$ has indegree~1 and outdegree~2, then~$p$ is used to subdivide an edge side~$e$ of the underlying generator (because~$N$ has no pendant subtrees with two or more leaves) and we say that~$x$ \emph{is on side}~$e$.

Let $\cT$ be a set of rooted binary phylogenetic $X$-trees with no common pendant subtrees with two or more leaves, and let~$N$ be a binary hybridization network on $X$ that displays $\cT$. Then, clearly,~$N$ has no pendant subtrees with two or more leaves. Let~$G$ be the generator underlying~$N$. A common chain $C=(x_1,x_2,\ldots ,x_n)$ of~$\cT$ is said to \emph{survive} in~$N$ if all elements of $\{x_1,x_2,\ldots,x_n\}$ are on the same edge side of $G$, and $C$ is said to be \emph{atomized} in~$N$ if no two elements of $\{x_1,x_2,\ldots,x_n\}$ are on the same side of $G$.

\noindent\textbf{Kernels and fixed-parameter tractability.} A \emph{kernelization} of a parameterized problem is a polynomial-time algorithm that maps an instance~$x$ with parameter~$k$ to an instance~$x'$ with parameter~$k'$ such that (1)~$(x',k')$ is a yes-instance if and only if~$(x,k)$ is a yes-instance, (2) the size of~$x'$ is bounded by a function~$f$ of~$k$, and (3) the size of~$k'$ is bounded by a function of~$k$. A kernelization is usually referred to as a \emph{kernel} and the function~$f$ as the \emph{size} of the kernel. Thus, a parameterized problem admits a quadratic kernel if there exists a kernelization with~$f$ being a quadratic function. A parameterized problem is \emph{fixed-parameter tractable} if there exists an algorithm that solves the problem in time $O(g(k)|x|^{O(1)})$, with~$g$ being some function of~$k$ and~$|x|$ the size of~$x$. Such an algorithm is called an \emph{FPT-algorithm}. It is well known that a parameterized problem is fixed-parameter tractable if and only if it admits a kernelization and is decidable. However, not for every fixed-parameter tractable problem a kernel of polynomial size is known. Kernels are of particular interest because they can be used as a polynomial-time preprocessing which can be combined with any algorithm solving the problem.

\section{Fixed-parameter tractability of {\sc Hybridization Number}}\label{sec:fpt}
In this section, we establish the following theorem which is the main result of this paper.

\begin{theorem}\label{t:main}
Let $\cT$ be a set of rooted binary phylogenetic $X$-trees, let $\cT'$ be the set of rooted binary phylogenetic $X'$-trees obtained from $\cT$ by applying the subtree reduction as often as possible and subsequently the chain reduction as often as possible, and let~$k\in\NN$. Then, ${h(\cT') \leq k}$ if and only if ${h(\cT) \leq k}$ and~$|X'|\leq 20k^2$. In particular, {\sc Hybridization Number}, parameterized by $k$, is fixed-parameter tractable.
\end{theorem}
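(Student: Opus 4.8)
The plan is to establish three facts and then combine them. First, the subtree reduction preserves the hybridization number exactly, so $h(\cT)=h(\cT')$ after the subtree phase. Second, the chain reduction preserves the decision answer, that is, $h(\cT')\le k$ if and only if $h(\cT)\le k$. Third, once $\cT'$ is fully reduced and $h(\cT')\le k$, the bound $|X'|\le 20k^2$ holds. These give the stated biconditional: the forward direction follows because $h(\cT')\le k$ yields $h(\cT)\le k$ (safety) together with $|X'|\le 20k^2$ (size bound), while the backward direction follows from safety alone. Fixed-parameter tractability is then immediate, because each rule can be applied only $O(|X|)$ times and each application runs in polynomial time, so the kernelization is polynomial; since a ``yes''-instance reduces to an instance with at most $20k^2$ taxa, one decides the kernel by any terminating procedure whose running time depends only on $k$, giving an algorithm of the form $O(g(k)\,|x|^{O(1)})$.

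For the subtree reduction I would prove both inequalities. The bound $h(\cT)\le h(\cT')$ is easy: given an optimal network for $\cT'$, re-attach the common pendant subtree in place of the single leaf that replaced it; as it is a tree this adds no reticulations and the result displays $\cT$. For $h(\cT')\le h(\cT)$ I would invoke the standard fact that a common pendant subtree may be assumed to occur as a pendant subtree of some optimal network for $\cT$, after which contracting it to a leaf leaves the hybridization number unchanged. I would also record that, since $\cT'$ has no common pendant subtree with two or more leaves, \emph{any} binary network $N'$ displaying $\cT'$ has no pendant subtree with two or more leaves, because such a subtree is reticulation-free and would be displayed as a pendant subtree of every tree of $\cT'$. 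This is exactly what makes the generator underlying $N'$ well defined.

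The size bound is the heart of the quadratic kernel and is where generators enter. Fix an optimal binary network $N'$ for the fully reduced $\cT'$ with $h(N')=k'\le k$ (binary by Lemma~\ref{l:binary}), and let $G$ be its underlying generator, a binary $k'$-reticulation generator. A degree count on $G$ shows it has at most $4k'-1$ edge sides and at most $k'$ vertex sides. Each vertex side hosts exactly one taxon of $N'$, namely its adjoined leaf: a chain below a reticulation would place those taxa on an edge side, and a cherry below a reticulation would be a forbidden pendant subtree. The key observation is that the taxa on any single edge side of $G$, listed in the order in which they subdivide the edge, form a \emph{common chain} of $\cT'$, since the subdividing vertices lie on a reticulation-free path of $N'$ and that path together with the order of its pendant leaves is inherited intact by every tree displayed by $N'$. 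As $\cT'$ is chain-reduced, no common chain has more than $5k$ elements, so each edge side hosts at most $5k$ taxa, and summing gives $|X'|\le k'+(4k'-1)\cdot 5k\le 20k^2$.

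The main obstacle is the backward direction of the chain reduction, namely $h(\cT')\le k\Rightarrow h(\cT)\le k$. The forward direction is routine, as deleting $x_{5k+1},\ldots,x_n$ from an optimal network for $\cT$ and suppressing degree-two vertices produces a network for $\cT'$ of no larger hybridization number. For the converse I would prove a chain-survival lemma: there is an optimal network $N'$ for $\cT'$ in which the common chain $(x_1,\ldots,x_{5k})$ survives, meaning all its elements lie on a single edge side of the underlying generator in the correct order. Given this, one re-inserts $x_{5k+1},\ldots,x_n$ by further subdividing that same edge side above $x_{5k}$ in the common chain order; this adds no reticulation and the extended network displays $\cT$, so $h(\cT)\le h(N')\le k$. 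Proving the survival lemma is the delicate part: starting from an arbitrary optimal $N'$ one must reroute the chain elements onto one edge side without increasing the hybridization number, exploiting that a common chain imposes the same nested order in every tree and that, by the side count above, the $5k$ chain elements cannot all occupy distinct sides. I expect this rerouting argument, together with checking that the fully reduced instance is simultaneously free of large common pendant subtrees and of long common chains, to be the technically hardest step.
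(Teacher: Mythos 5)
Your outline coincides with the paper's proof step for step: your subtree-safety claim is Lemma~\ref{lem:subtreereduction}, your degree count on the generator (at most $4k-1$ edge sides and $k$ vertex sides) is Lemma~\ref{lem:generator}, your size bound (one taxon per vertex side, at most $5k$ per edge side because the taxa on a common edge side form a common chain of $\cT'$) is Lemma~\ref{lem:kernelsize}, and your chain-safety claim is Lemma~\ref{lem:chainreduction}, proved exactly as you suggest by pigeonhole plus re-insertion of $x_{5k+1},\ldots,x_n$ on the edge side on which the short chain survives. Everything you assert is correct, including the observation that a fully subtree-reduced instance forces every displaying network to be free of pendant subtrees with two or more leaves, which is what makes the underlying generator well defined. (One harmless deviation: for the easy direction of chain safety you delete leaves from an arbitrary optimal network, whereas the paper routes even this direction through chain survival; your version works and is slightly simpler.)

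The genuine gap is the one step you explicitly defer: the chain-survival (rerouting) lemma, which is Lemma~\ref{lem:chains} of the paper and carries essentially all of the technical weight. Your proposal states what the rerouting must achieve but does not prove that it can be done. Concretely, the paper argues as follows: since the chain $c=(x_1,\ldots,x_n)$ is not atomized, two of its taxa $x$, $x'$ lie on a common edge side, with parents $p$ and $p'$ where $p$ is an ancestor of $p'$; one first rules out chain taxa sitting on vertex sides (deleting such a leaf and re-attaching it on an edge side would strictly decrease the hybridization number, contradicting optimality); then one deletes all chain taxa, cleans up, subdivides the image of the edge $e=(g,p)$ by $n$ new vertices, and re-attaches the whole chain there. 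The delicate point your sketch leaves open is verifying that the modified network still displays every $T\in\cT'$. For this, the paper takes the embedding of $T$ and its backbone path $B$ carrying the caterpillar on $c$, and shows that $B$ must pass through $p$: since $x$ and $x'$ share an edge side, the path from $p$ to $p'$ is unique, so any path from the first vertex of $B$ to $p'$ goes through $p$. Consequently the embedding, after the same deletion and suppression operations, still contains the image of $e$, and can be extended across the re-subdivided edge to pick up the re-attached chain. Without this verification the rerouting is only a plausible operation, not a proof; with it, your outline becomes exactly the paper's argument.
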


\noindent To establish Theorem~\ref{t:main}, we need several lemmas. We start by showing that the subtree reduction does not affect the solution of any instance $(\cT,k)$ of {\sc Hybridization Number}.

\begin{lemma}\label{lem:subtreereduction}
Let $\cT$ be a set of rooted binary phylogenetic $X$-trees and~$k\in\NN$. Furthermore, let $\cT^s$ be the set of trees that results from a single application of the subtree reduction to~$\cT$. Then $h(\cT)\leq k$ if and only if $h(\cT^s)\leq k$.
\end{lemma}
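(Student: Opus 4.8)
The plan is to prove the two directions of the biconditional separately, showing in each case that a hybridization network displaying one set of trees can be converted into a network displaying the other set with the same hybridization number. Let the common pendant subtree being reduced be~$S$ (with leaf set~$Y \subseteq X$ and~$|Y| \geq 2$), and let~$x$ be the new taxon that replaces~$S$ in each tree, so that~$\cT^s$ is a set of phylogenetic $X^s$-trees with~$X^s = (X \setminus Y) \cup \{x\}$.

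First I would handle the direction $h(\cT) \leq k \Rightarrow h(\cT^s) \leq k$. Take a binary hybridization network~$N$ that displays~$\cT$ with~$h(N) = h(\cT) \leq k$; such a network exists by the definition of~$h(\cT)$ together with Lemma~\ref{l:binary}, which lets us assume~$N$ is binary. The key observation is that since~$S$ is a pendant subtree of \emph{every} tree in~$\cT$, the restriction of each displayed tree to~$Y$ is the fixed tree~$S$, dangling below a single edge. I would argue that one can locate within~$N$ a subtree displaying~$S$ that is attached by a single edge, contract it to a single leaf labeled~$x$, and obtain a network~$N^s$ on~$X^s$. The resulting~$N^s$ displays every tree of~$\cT^s$ (each such tree is obtained from its counterpart in~$\cT$ by exactly the same contraction), and since the contraction only deletes edges and vertices inside a pendant subtree (and possibly suppresses a degree-two vertex), it does not create or destroy any reticulation, so~$h(N^s) \leq h(N) \leq k$.

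For the converse direction $h(\cT^s) \leq k \Rightarrow h(\cT) \leq k$, I would reverse the construction. Take a binary network~$N^s$ displaying~$\cT^s$ with~$h(N^s) \leq k$, again binary by Lemma~\ref{l:binary}. The leaf labeled~$x$ sits at the end of some edge; I would replace that leaf by a copy of the subtree~$S$, attaching the root of~$S$ in place of~$x$. Call the result~$N$. Because~$S$ is a tree with no reticulations, this substitution adds no hybridization vertices, so~$h(N) = h(N^s) \leq k$. It remains to check that~$N$ displays each tree~$T \in \cT$: the tree~$T$ is obtained from its image~$T^s \in \cT^s$ by substituting~$S$ for the leaf~$x$, and a subdivision of~$T^s$ in~$N^s$ extends to a subdivision of~$T$ in~$N$ by grafting in the (embedded) copy of~$S$. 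Combining the two directions yields the claimed equivalence.

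The main obstacle, and the step deserving the most care, is the precise argument in the forward direction that a network displaying~$\cT$ actually contains a \emph{single pendant} subtree on~$Y$ that can be cleanly contracted. Displaying is a fairly weak relation---$N$ need only contain a subdivision of each~$T$---so a priori the embeddings of~$S$ for different trees could use different parts of~$N$, and the taxa of~$Y$ need not hang off a common pendant edge in~$N$ itself. I would need to argue that, possibly after modifying~$N$ without increasing~$h(N)$, the leaves in~$Y$ can be assumed to form a genuine pendant subtree isomorphic to~$S$; this is the sort of structural claim that the original two-tree treatments establish and that I would either invoke or reprove here, and it is where the real work of the lemma lies.
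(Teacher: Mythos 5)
Your overall route is the same as the paper's: contract a pendant copy of~$S$ to a single leaf for one direction, and expand the leaf back into~$S$ for the other. The converse direction (replacing the leaf~$x$ by a copy of~$S$, which adds no reticulations and extends any subdivision of~$T^s$ to one of~$T$) is complete and matches the paper. The problem is the forward direction, and you have diagnosed it yourself: everything hinges on the claim that the leaves of~$Y$ form (or can be made to form) a genuine pendant copy of~$S$ in~$N$, and your proposal never establishes this---it only records that it would need to be argued or invoked from the two-tree literature. Since that structural claim is precisely the mathematical content of the lemma (the contraction itself is routine), deferring it leaves a genuine gap rather than a complete proof.

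The paper closes this gap with a minimality argument rather than a modification argument: choose~$N$ not merely with $h(N)\leq k$, but with $h(N)$ minimum over \emph{all} hybridization networks displaying~$\cT$. Then~$S$ must already be a pendant subtree of~$N$: if it were not, one could delete the leaves of~$Y$, tidy up, and re-attach~$S$ as a pendant subtree---since~$S$ is pendant in every tree of~$\cT$, the resulting network still displays~$\cT$ and has strictly smaller hybridization number, contradicting the choice of~$N$. So in a minimum network no modification is needed, and the contraction you describe goes through directly. Note also that Lemma~\ref{l:binary} is not actually needed here; the paper's argument works for arbitrary, not necessarily binary, networks. If you want your proof to be self-contained, this exchange-plus-minimality step is the piece you must supply.
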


\begin{proof}
First assume that $h(\cT)\leq k$. Then there exists a hybridization network~$N$ that displays $\cT$ such that $h(N)\leq k$. Without loss of generality, choose $N$ such that $h(N)$ is minimized over all hybridization networks that display $\cT$. Consider a common pendant subtree $S$ of $\cT$ that was reduced under an application of the subtree reduction. Then,~$S$ is also a pendant subtree in~$N$ because otherwise there would exist a hybridization network that displays~$\cT$ and has a smaller hybridization number than~$N$. Now, by obtaining a network $N'$ from $N$ by replacing $S$ with a new vertex labeled $s$, it is easily checked that $N'$ is a hybridization network that displays~$\cT^s$. By reversing the argument, it follows that $h(\cT)\leq k$ if and only if $h(\cT^s)\leq k$.
\end{proof}

\noindent As a result of this lemma, we can assume throughout the remainder of this paper that the set of input trees~$\cT$ to {\sc Hybridization Number} has no common pendant subtree. To establish a similar result for the chain reduction (Lemma~\ref{lem:chainreduction}), we need two additional lemmas and some definitions.

For a rooted phylogenetic $X$-tree~$T$ and a subset~$X'$ of~$X$, we define $T|X'$ to be the rooted phylogenetic $X'$-tree obtained from~$T$ by taking the minimal subtree of~$T$ containing all leaves in~$X'$ and suppressing all vertices with indegree~1 and outdegree~1. Given two vertices~$u$ and~$v$ of a hybridization network, we say that~$u$ is an \emph{ancestor} of~$v$ if there is a directed path from~$u$ to~$v$. Furthermore, a vertex of a directed path $P$ is called \emph{internal} if it is not the first or the last vertex of $P$. Lastly, two directed paths~$P_1$ and~$P_2$ are called \emph{internally vertex-disjoint} if there is no vertex of $P_1$ and $P_2$ that is an internal vertex of~$P_1$ and~$P_2$.

\begin{lemma}\label{lem:chains}
Let $\cT$ be a set of rooted binary phylogenetic $X$-trees with no common pendant subtrees with at least two leaves. Then there exists a binary hybridization network~$N$ on $X$ with $h(N)=h(\cT)$ that displays~$\cT$ such that each common chain of~$\cT$ either survives or is atomized in~$N$.
\end{lemma}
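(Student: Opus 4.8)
The plan is to start from an optimal network and repair its common chains by an exchange argument whose engine is the fact that an edge side can absorb an entire sub-chain at no cost. First I would fix an optimal binary hybridization network $N$ with $h(N)=h(\cT)$ displaying $\cT$; one exists by the definition of $h(\cT)$ together with Lemma~\ref{l:binary}. Since $\cT$ has no common pendant subtree with two or more leaves, $N$ has none either (as noted just before the lemma), so the underlying generator $G$ is well defined and every leaf of $N$ lies on a side of $G$. Because a vertex side is the unique child of a hybridization vertex it hosts exactly one leaf, whereas an edge side hosts one leaf per subdividing vertex; hence a common chain $C$ fails to both survive and atomize \emph{precisely} when some edge side $e$ carries at least two leaves of $C$ while at least one leaf of $C$ lies off $e$. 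As survival and atomization are inherited by sub-chains, it suffices to treat maximal common chains, and among all optimal networks displaying $\cT$ I would choose $N$ to minimize the number of maximal common chains that neither survive nor atomize.

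The key tool I would isolate and prove as a claim is that \emph{edge sides host chains for free}: subdividing an edge of $N$ and attaching pendant leaves creates no new hybridization vertex, so it leaves $h(N)$ unchanged, and because the order $(x_1,\dots,x_n)$ is the same in every tree of $\cT$, placing all of $C$'s leaves consecutively on a single edge side in this common order is compatible with displaying every $T\in\cT$. Concretely, each $T$ reads off its copy of $C$ along that edge and realizes the rest of $T$ through the unaltered remainder of $N$. I would establish this by exhibiting, for each $T$, the modified embedding explicitly.

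With this in hand I would run the exchange. Suppose some maximal common chain $C$ neither survives nor atomizes, so an edge side $e$ carries at least two leaves of $C$. If some leaf of $C$ sits on a vertex side $f$, move it onto $e$ and suppress the now-childless reticulation $f$: by the claim the result still displays $\cT$, yet $h$ has dropped by one, contradicting the optimality of $N$. Hence all leaves of $C$ lie on edge sides, and I would consolidate them onto $e$ by successive edge-to-edge relocations, which are display-preserving by the claim and $h$-preserving since no reticulation is deleted. This makes $C$ survive. Because these relocations move only the leaves of $C$ and delete no reticulation, the side of every leaf outside $C$ is unchanged, so no other maximal common chain worsens; thus the number of non-survived, non-atomized chains strictly decreases, contradicting the choice of $N$. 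Therefore every maximal, and hence every, common chain survives or is atomized in $N$.

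The main obstacle is the key claim: that relocating chain leaves onto one edge side, and in particular pulling a leaf off a reticulation vertex side and suppressing it, preserves the \emph{simultaneous} displaying of all trees. The difficulty is that distinct trees may route to the same chain leaf through different incoming edges of a reticulation, reflecting the chain attaching to different parts of different trees; one must check that once $C$ sits in its common order along $e$, each tree's embedding can still be completed through the two endpoints of $e$ and the untouched part of $N$, and that a vacated reticulation is neither required by another tree nor its suppression apt to disconnect an embedding. Verifying these embeddings for every $T\in\cT$, and confirming that the relocations really leave other chains' side-assignments intact (including any overlap of maximal chains at shared endpoints), is where the real work lies; the $h$-bookkeeping is then routine.
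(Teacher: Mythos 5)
Your outline matches the paper's strategy in skeleton---pick the edge side that already carries two leaves of the offending chain, rule out chain leaves on vertex sides via optimality, move the whole chain onto that side, and finish by an exchange/counting argument---but the proof is missing precisely at its crux. Your ``key claim'' (that placing all of $C$'s leaves consecutively on a single edge side preserves the display of every tree) is asserted, not proved, and as you state it, for an \emph{arbitrary} edge side, it is false. Display is an existential statement about embeddings, and subdividing an edge, while free for $h$, is not free for displayability: if, say, two disjoint common chains $C$ and $C'$ are forced onto one edge side, every displayed tree restricted to $C\cup C'$ must be a caterpillar with the fixed interleaving given by that side, so any tree in which $C$ and $C'$ hang off incomparable subtrees can no longer be displayed. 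So the claim can only hold for a carefully chosen side, and something specific about that side must be proved. Your closing paragraph concedes that verifying the embeddings ``is where the real work lies''---but flagging the obstacle is not overcoming it, and every step of your exchange (including the step that vacates a vertex side and derives a contradiction with optimality) rests on this unproven claim.

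What is missing is exactly the argument the paper devotes the bulk of its proof to. With $x,x'$ two chain leaves on the common edge side, $p,p'$ their parents ($p$ an ancestor of $p'$), and $e=(g,p)$ the edge entering $p$: for each $T\in\cT$, an embedding of $T$ in the optimal network $N_0$ contains a subdivided caterpillar on $C$, i.e.\ a backbone path $B$ together with pairwise internally vertex-disjoint pendant paths $L_1,\ldots,L_n$ to the chain leaves. Since the internal vertices of an edge side have indegree $1$, the directed path reaching $p'$ is unique and passes through $p$; combined with the internal disjointness of the pendant paths, this forces $B$ itself to pass through $p$, \emph{for every tree simultaneously}. Consequently, after deleting all of $C$ and cleaning up, the edge $e'$ corresponding to $e$ still lies on the image of each tree's embedding, so subdividing $e'$ and re-hanging the chain there reconstructs a subdivision of each $T$ inside the new network. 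This is the lemma's real content, and without it your relocation steps have no justification. A secondary structural issue: your one-leaf-at-a-time relocations require every \emph{intermediate} network to display $\cT$, which is strictly harder to guarantee than the paper's single all-at-once operation (delete all of $C$, clean up, reinsert the entire chain along $e'$), so even granting the claim you would want to restructure the exchange accordingly.
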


\begin{proof}
Let~$N_0$ be a binary hybridization network that displays~$\cT$ such that $h(N_0)=h(\cT)$. Note that such a network exists by Lemma~\ref{l:binary}. We will construct a network~$N$ from $N_0$ that satisfies the statement of the lemma by considering each common chain~$c$ of~$\cT$ that neither survives nor is atomized in~$N_0$ and making changes to the network so that~$c$ survives in~$N$.

Let~$c=(x_1,x_2,\ldots,x_n)$ be a common chain of~$\cT$ that neither survives nor is atomized in $N_0$. Note that~$n\geq 3$, since any chain of two taxa that does not survive is, by definition, atomized, and that~$N_0$ has no pendant subtrees of at least two leaves since~$\cT$ has no common pendant subtrees of at least two leaves.

Let $G_0$ be the generator underlying~$N_0$ and, for convenience, let $C=\{x_1,x_2,\ldots,x_n\}$. Since~$c$ is not atomized, there exist taxa~$x$ and~$x'$ in $C$ that are on the same side~$s$ of~$G_0$. Note that~$s$ can only be an edge side. Let~$p$ and~$p'$ be the parents of~$x$ and~$x'$, respectively, and assume without loss of generality that~$p$ is an ancestor of~$p'$. Let~$e=(g,p)$ be the unique edge entering~$p$. Hence,~$e$ is an edge of the path in~$N_0$ corresponding to side~$s$.

We move all taxa of~$C$ to side~$s$; thereby creating a network in which~$c$ survives. More precisely, we construct networks~$N_1$ and~$N_2$ from~$N_0$ as follows. First, delete all taxa of~$C$ and clean up the resulting network by repeatedly deleting unlabeled outdegree-0 vertices and suppressing vertices with indegree~1 and outdegree~1 until none of these operations is possible (and one has thus obtained a valid hybridization network). Call this intermediate network~$N_1$. We remark that we delete unlabeled outdegree-0 vertices because these arise whenever a leaf is deleted that is on a vertex side of $G_0$. However, by moving such a leaf to an edge side, we reduce the hybridization number of the resulting network which would lead to a contradiction at the end of the proof. Thus, no taxon of~$C$ is on a vertex side of $G_0$. Now, let~$e'$ be the edge of~$N_1$ corresponding to edge~$e$ of~$N_0$. Subdivide~$e'$ by~$n$ vertices~$p_1,p_2,\ldots ,p_n$, creating a directed path $p_n,p_{n-1},\ldots ,p_1$, and introduce a leaf labeled~$x_i$ and an edge~$(p_i,x_i)$ for each~$i\in\{1,2,\ldots ,n\}$. Call the obtained network~$N_2$. 

It remains to show that~$N_2$ displays~$\cT$. Consider any tree~$T\in\cT$. Since~$N_0$ displays~$T$, there exists a subtree~$T_0$ of~$N_0$ that is a subdivision of~$T$. Since~$c$ is a chain of~$T$,~$T_0$ contains a subdivision of a caterpillar on~$C$. In other words, there exist a directed path~$B$ in~$N_0$ and directed paths $L_n,L_{n-1},\ldots ,L_1$ in $N_0$ that start on~$B$ (in that order) and lead to~$x_n,x_{n-1},\ldots ,x_1$ respectively, such that the directed paths $B,L_1,L_2,\ldots,L_n$ are pairwise internally vertex-disjoint. Moreover,~$B$ is chosen such that the first vertex~$r_c$ of~$B$ is the first vertex of~$L_n$ and the last vertex of~$B$ is the first vertex of~$L_1$ and the first vertex of~$L_2$. We next argue that~$p$ is a vertex of~$B$. Since~$x$ and~$x'$ are on the same side of~$G_0$ (and~$p$ is an ancestor of~$p'$), there is a unique directed path from~$p$ to~$p'$. Hence, any path from~$r_c$ to~$p'$ passes through~$p$. Thus,~$B$ passes through~$p$ and it follows that~$p$ is a vertex of~$B$. If edge~$e=(g,p)$ is not an edge of~$B$ (i.e. if $x=x_n$), add~$g$ to~$B$. Now, recall that~$N_1$ was obtained from~$N_0$ by deleting and suppressing vertices. By deleting or suppressing each vertex in~$T_0$ that has been deleted or suppressed in~$N_0$ to obtain~$N_1$, we obtain a subtree~$T_1$ of~$N_1$ that contains a subdivision of $T|(X\setminus C)$. Hence,~$N_1$ displays $T|(X\setminus C)$. Moreover, note that~$e'$ is an edge of $T_1$. Recall that~$N_2$ was obtained from~$N_1$ by subdividing~$e'$ and hanging leaves labeled by elements of~$C$ below the vertices subdividing~$e'$, and observe that~$T$ can be obtained from $T|(X\setminus C)$ by applying the same operations. Therefore, we consider the subtree~$T_2$ of~$N_2$ obtained by applying the same operations to~$T_1$, and conclude that~$T_2$ contains a subdivision of~$T$. It follows that~$N_2$ displays~$T$. Since the above arguments hold for all~$T\in\cT$, it follows that~$N_2$ displays~$\cT$.

By repeating the above construction for each common chain of~$\cT$ that does not survive and is not atomized in $N_0$, we obtain a network~$N$ that displays~$\cT$ such that each common chain of~$\cT$ either survives or is atomized. Moreover, the changes that turned~$N_0$ into~$N$ did not increase the reticulation number. Hence, $h(N)\leq h(N_0)=h(\cT)$. If $h(N) < h(\cT)$, we would obtain a contradiction. Therefore, we conclude that $h(N)=h(\cT)$.
\end{proof}

The following lemma is implicitly in~\cite[Theorem 3.2]{kelk2011cycle}. We include it here for reasons of completeness.

\begin{lemma}\label{lem:generator}
Let~$N$ be a binary hybridization network with $h(N)=k$, and let $G$ be its underlying generator. Then $G$ has at most $4k-1$ edge sides and at most~$k$ vertex sides. In particular, $G$ has at most $5k-1$ sides.
\end{lemma}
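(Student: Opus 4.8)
The plan is to bound the edge sides and vertex sides of $G$ separately by a direct counting argument on the vertices of $G$, using the degree constraints that define a binary $k$-reticulation generator together with the handshaking-type identity relating indegrees and outdegrees. First I would recall the three admissible vertex types in $G$ (apart from the root): \emph{tree vertices} with indegree~1 and outdegree~2, \emph{reticulation vertices} with indegree~2 and outdegree~1, and \emph{leaf-reticulation vertices} with indegree~2 and outdegree~0. The vertex sides are precisely the vertices of the last type, so the key observation is that the total number of reticulations accounted for by indegree-2 vertices is exactly $k$. Writing $r$ for the number of indegree-2-outdegree-1 vertices and $\ell$ for the number of indegree-2-outdegree-0 vertices, each such vertex contributes $d^-(v)-1 = 1$ to $h(N)=k$, so $r+\ell = k$. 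In particular $\ell \leq k$, which immediately gives the bound of at most $k$ vertex sides.

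Next I would count the edges of $G$, since the edge sides are exactly the edges. The strategy is to express the total number of edges $|E(G)|$ in two ways, as the sum of all outdegrees and as the sum of all indegrees, and then solve for the relevant quantities in terms of $k$. Let $t$ denote the number of indegree-1-outdegree-2 (tree) vertices. Summing outdegrees over all non-root vertices plus the root (which has outdegree~1) gives $|E(G)| = 1 + 2t + r + 0\cdot\ell = 1 + 2t + r$; summing indegrees gives $|E(G)| = 0 + t + 2r + 2\ell = t + 2r + 2\ell$. Equating these two expressions yields $t = r + 2\ell - 1$. Substituting into the outdegree expression, $|E(G)| = 1 + 2(r+2\ell-1) + r = 3r + 4\ell - 1$. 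Using $r = k-\ell$ then gives $|E(G)| = 3(k-\ell) + 4\ell - 1 = 3k + \ell - 1$, and since $\ell \leq k$ this is at most $4k-1$, establishing the bound on edge sides.

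Finally I would combine the two bounds: $G$ has at most $4k-1$ edge sides and at most $k$ vertex sides, hence at most $(4k-1)+k = 5k-1$ sides in total. The bound on vertex sides is tight exactly when $\ell = k$ (i.e.\ all reticulations are on leaf-reticulation vertices), whereas the edge-side bound is tight in that same extreme case, which is reassuring for the combined count.

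I expect the main subtlety, rather than a genuine obstacle, to be bookkeeping in the double-counting step: one must account carefully for the single root of indegree~0 and outdegree~1 (which contributes to the outdegree sum but not the indegree sum) and ensure that no vertex type is double-counted or omitted. It is also worth confirming that the generator is connected and that every vertex indeed falls into one of the enumerated types, so that the two degree-sum expressions are exhaustive; this follows directly from the definition of a binary $k$-reticulation generator given in the excerpt. Once the identity $t = r + 2\ell - 1$ is derived correctly, the rest is elementary arithmetic.
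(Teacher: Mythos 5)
Your proposal is correct and follows essentially the same argument as the paper: the same double counting of indegrees and outdegrees via the Handshaking Lemma (your identity $t = r + 2\ell - 1$ is exactly the paper's $n_2 = n_1 + 2n_0 - 1$), the same use of $r + \ell = k$ to bound the edge count by $4k-1$, and the same observation that vertex sides have indegree~2 to bound them by $k$. No gaps; only the notation differs.
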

\begin{proof}
Let $n_0$ be the number of vertices in $G$ with indegree 2 and outdegree 0, let $n_1$ be the number of vertices in $G$ with indegree 2 and outdegree 1, and let $n_2$ be the number of vertices in $G$ with indegree 1 and outdegree 2. Then, the total indegree of $G$ is $n_2+2n_1+2n_0$ while, considering the root vertex with indegree 0 and outdegree 1, the total outdegree of $G$ is $1+2n_2+n_1$. Hence, by the Handshaking Lemma, we have $n_2+2n_1+2n_0 = 1+2n_2+n_1$ and, therefore, $n_2=n_1+2n_0-1$. Since the number of edge sides of $G$, denoted $|E(G)|$, is equal to the total indegree of $G$ and noting that $n_0+n_1=k$, we have $$|E(G)|=n_2+2n_1+2n_0=3n_1+4n_0-1\leq 4k-1.$$
Furthermore, since each vertex side of $G$ is a vertex with indegree~2, $G$ has at most~$k$ such sides; thereby establishing the lemma.
\end{proof}

\begin{lemma}\label{lem:chainreduction}
Let $\cT$ be a set of rooted binary phylogenetic $X$-trees and~$k\in\NN$. Furthermore, let $\cT^c$ be the set of trees that results from a single application of the chain reduction to~$\cT$. Then $h(\cT)\leq k$ if and only if $h(\cT^c)\leq k$.
\end{lemma}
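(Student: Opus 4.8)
The plan is to prove Lemma~\ref{lem:chainreduction} by establishing both implications separately, with the chain reduction acting on a single common chain $C = (x_1, x_2, \ldots, x_n)$ with $n > 5k$, where we delete the leaves labelled $x_{5k+1}, \ldots, x_n$. The easy direction is the forward one: if $h(\cT) \leq k$, then any binary hybridization network $N$ displaying $\cT$ with $h(N) \leq k$ can be restricted to the reduced taxon set. Formally, I would take the network $N$, delete the leaves labelled by the removed taxa, and clean up (suppress indegree-1/outdegree-1 vertices, delete unlabelled outdegree-0 vertices). The resulting network displays $\cT^c$ and has hybridization number at most that of $N$, since deleting leaves and suppressing vertices never increases the indegree of any vertex. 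This gives $h(\cT^c) \leq h(\cT) \leq k$.

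The reverse direction is where the real work lies, and here I would invoke Lemma~\ref{lem:chains}. Assume $h(\cT^c) \leq k$. Since $\cT^c$ still has no common pendant subtrees (the subtree reduction having been applied, and one can check the chain reduction does not create any), Lemma~\ref{lem:chains} gives a binary hybridization network $N^c$ with $h(N^c) = h(\cT^c) \leq k$ displaying $\cT^c$ in which every common chain of $\cT^c$ either \emph{survives} or is \emph{atomized}. In particular, the (now shortened) chain $(x_1, \ldots, x_{5k})$ of $\cT^c$ either survives or is atomized in $N^c$. The goal is to reinsert the deleted taxa $x_{5k+1}, \ldots, x_n$ to obtain a network $N$ displaying $\cT$ without increasing the hybridization number, which would yield $h(\cT) \leq h(N) = h(N^c) \leq k$.

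The crux is a counting argument on the sides of the generator $G^c$ underlying $N^c$. By Lemma~\ref{lem:generator}, $G^c$ has at most $5k - 1$ sides in total and at most $4k - 1$ edge sides. If the shortened chain is \emph{atomized}, then its $5k$ taxa occupy $5k$ distinct sides of $G^c$, contradicting that $G^c$ has at most $5k-1$ sides. Hence the shortened chain must \emph{survive}, meaning all of $x_1, \ldots, x_{5k}$ lie on a single edge side $s$ of $G^c$. The taxa on an edge side appear along a directed path in $N^c$; and since the original chain $C$ is a chain in each $T \in \cT$, the path structure witnessing the display of $(x_1,\ldots,x_{5k})$ can be extended by subdividing the relevant edge further and attaching the taxa $x_{5k+1}, \ldots, x_n$ below the new subdivision vertices in the correct order. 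This reinsertion only subdivides edges and adjoins leaves, so it leaves all indegrees unchanged and hence preserves the hybridization number.

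\textbf{The main obstacle} I anticipate is verifying rigorously that the reinserted network $N$ actually displays $\cT$ — that is, that for each $T \in \cT$ one can extend the subdivision of $T|(X \setminus \{x_{5k+1},\ldots,x_n\})$ embedded in $N^c$ to a subdivision of the full tree $T$. This parallels the embedding argument in the proof of Lemma~\ref{lem:chains}: one must locate, within the subtree of $N^c$ witnessing the display of $T$, the directed path realizing the surviving chain, and then check that appending the new pendant leaves along the extended subdivision of side $s$ respects the caterpillar order dictated by $C$ being a chain of $T$. The key structural fact making this work is that all the relevant taxa sit on one edge side, so there is a \emph{unique} directed path carrying them, and inserting additional leaves along its subdivision is unambiguous and simultaneously valid for every $T \in \cT$. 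Combining both implications then completes the proof.
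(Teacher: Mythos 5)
Your proof is correct, and on the substantive (backward) direction it is the paper's proof: apply Lemma~\ref{lem:chains} to $\cT^c$, use Lemma~\ref{lem:generator} plus the pigeonhole principle to rule out atomization of the $5k$-taxon chain $(x_1,\ldots,x_{5k})$ (so it survives on a single edge side), then reinsert $x_{5k+1},\ldots,x_n$ by subdividing the edge entering the parent of $x_{5k}$; since subdividing edges and adjoining leaves changes no indegree, the hybridization number is preserved, and your sketch of why the resulting network displays each $T\in\cT$ matches the paper's embedding argument. Where you genuinely diverge is the forward direction: the paper does \emph{not} restrict an arbitrary optimal network, but invokes Lemma~\ref{lem:chains} and the same pigeonhole argument a second time to obtain a network in which the \emph{full} chain $c$ survives, and only then deletes $x_{5k+1},\ldots,x_n$. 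The payoff of that detour is that the deleted leaves lie consecutively on one edge side, so deletion is just the removal of subdivision vertices along a path and the "it is easily checked" step really is immediate. Your direct deletion from an arbitrary network also works, but it conceals some case analysis you should make explicit: a deleted leaf may hang from a hybridization (vertex-side) vertex, so the cleanup can cascade --- unlabeled outdegree-0 vertices must be deleted, and suppression can create parallel edges --- and one has to check that each such operation only decreases the hybridization number and never destroys the subdivision of $T|(X\setminus\{x_{5k+1},\ldots,x_n\})$; your one-line justification ("never increases the indegree of any vertex") does not quite cover these cases, though the claim is true. So your route is marginally more elementary (one fewer use of Lemma~\ref{lem:chains}) at the price of these technical checks. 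A final point in your favour: both you and the paper need $\cT^c$ to have no common pendant subtrees in order to apply Lemma~\ref{lem:chains} to it; you flag this explicitly (and it can indeed be verified that the chain reduction creates none, given that $\cT$ has none), whereas the paper uses it silently.
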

\begin{proof}
Let $c=(x_1,x_2,\ldots ,x_n)$ be a common chain of~$\cT$ which has been reduced by a chain reduction to a common chain $c'=(x_1,x_2\ldots ,x_{5k})$ of~$\cT^c$. Thus, $n>5k$. 

First, suppose that $h(\cT)\leq k$. Then, by Lemma~\ref{lem:chains}, there exists a binary hybridization network~$N$, with~$h(N)\leq k$, that displays~$\cT$ such that any common chain of~$\cT$ either survives or is atomized in~$N$. Furthermore, by Lemma~\ref{lem:generator}, the generator underlying~$N$ has at most~$5k-1$ sides. Hence, by the pigeonhole principle,~$c$ cannot be atomized in~$N$ and, therefore, survives in~$N$. Now, let~$N'$ be the network obtained from~$N$ by replacing $c$ with $c'$. More precisely, delete all leaves labeled by taxa in $\{x_{5k+1},x_{5k+2},\ldots,x_n\}$ and suppress all resulting vertices of indegree and outdegree both 1. Then,~as $N$ displays $\cT$, it is easily checked that $N'$ displays~$\cT^c$ and~$h(N')\leq k$. Thus, $h(\cT^c)\leq k$.

To show the other direction, suppose that $h(\cT^c)\leq k$. Then, by Lemma~\ref{lem:chains}, there exists a binary hybridization network~$N'$ with~$h(N')\leq k$, that displays~$\cT^c$ such that any common chain of~$\cT^c$ either survives or is atomized in~$N'$. By again using the pigeonhole principle, $c'$ cannot be atomized in $N'$ since it has $5k$ taxa while the generator underlying $N'$ has at most $5k-1$ sides. Hence, $c'$ survives in~$N'$. Now, let~$N$ be the network obtained from~$N'$ by replacing~$c'$ with~$c$. To be precise, let~$e$ be the edge entering the parent, say~$p_{5k}$, of the vertex labeled~$x_{5k}$ in~$N'$. Since $c'$ survives in $N'$, note that $e$ is unique. Subdivide~$e$ by $n-5k$ new vertices $p_{5k+1},p_{5k+2},\ldots ,p_n$, creating a directed path $p_{n},p_{n-1},\ldots ,p_{5k+1}$, and add a leaf labeled $x_i$ and an edge $(p_i,x_i)$ for each $i\in\{5k+1,5k+2,\ldots ,n\}$. 
Then, as $N'$ displays $\cT^c$, it is easily checked that $N$ displays~$\cT$ and has~$h(N)\leq k$. Thus, $h(\cT)\leq k$.
\end{proof}

We next show that the subtree and chain reduction can be applied to a collection of rooted binary phylogenetic $X$-trees until the label set of the resulting collection of trees has size bounded by a quadratic function of $h(\cT)$. For the proof, we follow an approach similar to the one taken by Kelk et al.~\cite[Lemma 3.2]{kelk2011cycle}.

\begin{lemma}\label{lem:kernelsize}
Let $\cT$ be a set of rooted binary phylogenetic $X$-trees, let $\cT'$ be the set of rooted binary phylogenetic $X'$-trees obtained from $\cT$ by applying the subtree and chain reduction until no further reduction is possible, and let~$k\in\NN$. If $h(\cT)\leq k$, then $|X'|\leq 20k^2$.
\end{lemma}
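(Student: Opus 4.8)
The plan is to fix a minimum binary hybridization network $N$ that displays $\cT'$ and to bound $|X'|$, the number of leaves of $N$, by controlling how many leaves can lie on each side of the generator underlying $N$. First I would record that, by Lemmas~\ref{lem:subtreereduction} and~\ref{lem:chainreduction}, both reductions preserve the property $h(\cdot)\le k$, so $h(\cT')\le k$; moreover, since no further reduction applies, $\cT'$ has no common pendant subtree with two or more leaves and no common chain with more than $5k$ taxa. By Lemma~\ref{lem:chains} there is a binary hybridization network $N$ with $h(N)=h(\cT')\le k$ that displays $\cT'$, and since $\cT'$ has no common pendant subtree, $N$ has no pendant subtree with two or more leaves. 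Hence the generator $G$ underlying $N$ is well defined and, by Lemma~\ref{lem:generator}, has at most $4k-1$ edge sides and at most $k$ vertex sides, and every taxon of $X'$ is a leaf of $N$ lying on exactly one side of $G$.

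The vertex sides are immediate: each is a hybridization vertex of $N$ with a single child, hence carries exactly one leaf, accounting for at most $k$ taxa in total. The crux is the claim that \emph{the leaves lying on a single edge side form a common chain of~$\cT'$}. To see this, let $\ell_1,\ldots,\ell_m$ be these leaves, indexed so that their parents $q_1,\ldots,q_m$ appear in this order along the subdivided edge. Each $q_i$ has indegree~$1$ in $N$, so in $N$ the only directed route reaching $\ell_i$ runs through $q_1,\ldots,q_i$; therefore, in any embedding of a tree $T\in\cT'$ into $N$, the leaves $\ell_1,\ldots,\ell_m$ are forced to hang off this one directed path in exactly this order, and no further leaf is interleaved because no other leaf lies on this edge side. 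Suppressing the resulting degree-two vertices exhibits $(\ell_m,\ell_{m-1},\ldots,\ell_1)$ as a chain of $T$; as this holds for every $T\in\cT'$ with the same tuple, it is a common chain of $\cT'$.

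With the claim established, the size bound follows by counting. Since $\cT'$ admits no further chain reduction, every common chain of $\cT'$ has at most $5k$ taxa, so each edge side carries at most $5k$ leaves. Summing the contributions of all sides yields
\[
|X'|\ \le\ (4k-1)\cdot 5k+k\ =\ 20k^2-4k\ \le\ 20k^2 .
\]

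I expect the real work to lie in the structural claim of the second paragraph: one must verify rigorously, using that each subdividing vertex $q_i$ has indegree~$1$ and that no other leaf sits on the edge side, that every embedding of every tree in $\cT'$ is compelled to route through $q_1,\ldots,q_m$ and so realises these leaves as a single uninterrupted chain in one fixed order. The degenerate case in which the two deepest leaves share a parent after suppression is harmless, being exactly the ``$p_1=p_2$'' alternative permitted by the definition of a chain.
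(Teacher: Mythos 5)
Your proof is correct and takes essentially the same approach as the paper: fix a binary network $N$ displaying $\cT'$ with $h(N)\le k$, argue that maximality of the subtree reduction forbids pendant subtrees and maximality of the chain reduction limits each generator side to at most $5k$ leaves (one per vertex side), and then count sides via Lemma~\ref{lem:generator} to get $5k(4k-1)+k\le 20k^2$. The only differences are cosmetic: you spell out in detail the claim that the leaves on a single edge side form a common chain of $\cT'$, which the paper asserts without proof, and you invoke Lemma~\ref{lem:chains} where the weaker Lemma~\ref{l:binary} suffices.
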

\begin{proof}
As $h(\cT)\leq k$, it follows from Lemmas~\ref{lem:subtreereduction} and~\ref{lem:chainreduction} that $h(\cT')\leq k$. Let $N$ be a binary hybridization network that displays~$\cT'$ such that~$h(N)\leq k$. Furthermore, let~$G$ be its underlying binary $h(N)$-reticulation generator.

Observe that~$N$ has no pendant subtrees of size at least 2, since otherwise~$\cT'$ would have a common pendant subtree; thereby contradicting that the subtree reduction has been applied as often as possible. Furthermore,~$N$ does not have more than~$5k$ leaves that are on the same side of~$G$, since otherwise~$\cT'$ would have a common chain of size greater than~$5k$, thereby contradicting that the chain reduction has been applied as often as possible.

Thus,~$N$ has one leaf per vertex side of~$G$ and at most~$5k$ leaves per edge side of~$G$. By Lemma~\ref{lem:generator} (and because $h(N)\leq k$),~$G$ has at most~$4k-1$ edge sides and at most~$k$ vertex sides. Thus, the total number of leaves is at most $5k\cdot (4k-1) + k = 20k^2 - 4k < 20k^2$. It now follows that $|X'|\leq 20k^2$.
\end{proof}

Now, Theorem~\ref{t:main} follows from Lemmas~\ref{lem:subtreereduction},~\ref{lem:chainreduction} and~\ref{lem:kernelsize}.

\section{Concluding remarks}\label{sec:conclu}

While Theorem~\ref{t:main} proves the existence of an FPT-algorithm to solve {\sc Hybridization Number}, it does not describe an explicit algorithm to do so. In order to obtain such an algorithm, one needs an exponential-time exact algorithm to solve an instance of {\sc Hybridization Number} after it has been kernelized. One possible way to design an FPT-algorithm for {\sc Hybridization Number} is the following. Theorem~2 of~\cite{elusiveness} establishes an algorithm---called {\sc Clustistic}---that, given a set of rooted binary phylogenetic trees and an integer~$k$, finds all binary hybridization networks that represent all clusters of the trees (in the so-called {\it softwired sense}, see e.g.~\cite{twotrees}) and have hybridization number at most~$k$. Since any network that displays a given set of rooted binary phylogenetic trees also represents all clusters of those trees, {\sc Clustistic} finds it. Thus, an exponential-time exact algorithm for {\sc Hybridization Number} can be obtained by using {\sc Clustistic} and checking for each returned network if it displays the input trees (e.g. using the algorithm in~\cite{Kanj2008}, which is exponential in the number of hybridizations of a given hybridization network). In combination with the presented kernelization, this leads to an FPT-algorithm for {\sc Hybridization Number}. We omit the details of this algorithm as its theoretical worst-case running time is not necessarily the best and we expect that methods are possible that are much faster in practice. 
We also remark that if one allows weighted chains, as in \cite{bordewich07b}, then a slightly modified chain reduction can be used to obtain a linear kernel for a modified problem, where each common chain is associated with a weight.

A major open problem is to show whether or not it is also fixed-parameter tractable to compute the minimum hybridization number of a set of arbitrary rooted phylogenetic trees; thus allowing for trees that are nonbinary.

\noindent{\it Acknowledgements.} We thank Steven Kelk for many helpful discussions on the topic of this paper. Leo van Iersel was funded by a Veni grant of The Netherlands Organization for Scientific Research (NWO) and Simone Linz by the University of T\"ubingen and a travel fellowship by the German Academic Exchange Service (DAAD).

\bibliographystyle{plain}
\bibliography{MultiTreeFPT}

\end{document}